 
\documentclass[11pt]{article}%
\usepackage[cm]{fullpage}
\usepackage[usenames,dvipsnames,svgnames,table]{xcolor}%
\usepackage{bm}
\usepackage{paralist}%
\usepackage{mleftright}%
\usepackage{stmaryrd}%

\usepackage{xspace}%
\usepackage{ifpdf}%
\usepackage{graphicx}%
\usepackage[normalem]{ulem} %
\usepackage{caption}%
\usepackage{amsmath}%
\usepackage{amssymb}%
\usepackage[amsmath,thmmarks]{ntheorem}%
\theoremseparator{.}%

\usepackage{euscript}
\usepackage{mathrsfs}

\usepackage{verbatim}%

\usepackage{titlesec}%
\titlelabel{\thetitle. }%

\usepackage{bbm}%
\usepackage{bbding}

\usepackage{hyperref}%
\ifx\PDFBlackWhite\undefined%
\hypersetup{%
  breaklinks,%
  ocgcolorlinks, %
  colorlinks=true,%
  urlcolor=[rgb]{0.45,0.0,0.0},%
  linkcolor=[rgb]{0.0,0.45,0.0},%
  citecolor=[rgb]{0,0,0.45}%
}%

\fi

\allowdisplaybreaks

\ifx\UseBibLatex\undefined%
  \newcommand{\BibLatexMode}[1]{}
  \newcommand{\BibTexMode}[1]{#1}
\else
  \newcommand{\BibLatexMode}[1]{#1}
  \newcommand{\BibTexMode}[1]{}
\fi


\newcommand{\eps}{\Mh{\varepsilon}}%


\newlength{\savedparindent}

\newcommand{\Term}[1]{\textsf{#1}}

\definecolor{blue25}{rgb}{0, 0, 11}
\newcommand{\emphic}[2]{%
  \textcolor{blue25}{%
    \textbf{\emph{#1}}}%
  \index{#2}}

\ifx\PDFBlackWhite\undefined
\else
\renewcommand{\emphic}[2]{\textbf{\emph{#1}}}
\fi

\newcommand{\emphi}[1]{\emphic{#1}{#1}}


\newcommand{\cardin}[1]{\left| {#1} \right|}%

\newcommand{\pth}[1]{\mleft({#1}\mright)}
\newcommand{\brc}[1]{\left\{ {#1} \right\}}

\newcommand{\Set}[2]{\left\{ #1 \;\middle\vert\; #2 \right\}}
\newcommand{\pbrc}[1]{\mleft[ {#1} \mright]}

\newcommand{\Ex}[1]{\mathop{\mathbf{E}}\pbrc{#1}}

\newcommand{\ProbY}[2]{{\mathbf{Pr}}^{}_{#1}\pbrc{#2}}

\newcommand{\remove}[1]{}


\newtheorem{theorem}{Theorem}[section]%
\newtheorem{lemma}[theorem]{Lemma}%
\newtheorem*{restate*}[theorem]{Restatement of }%
%

%

%

\newcommand{\myqedsymbol}{\rule{2mm}{2mm}}

\theoremstyle{remark}%
\theoremheaderfont{\sf}%
\theorembodyfont{\upshape}%
\newtheorem{defn}[theorem]{Definition}
\newtheorem*{remark:unnumbered}[theorem]{Remark}%

\theoremheaderfont{\em}%
\theorembodyfont{\upshape}%
\theoremstyle{nonumberplain}%
\theoremseparator{}%
\theoremsymbol{\myqedsymbol}%
\newtheorem{proof}{Proof:}%


\numberwithin{figure}{section}%
\numberwithin{table}{section}%
\numberwithin{equation}{section}%


\newcommand{\HLinkPageSuffix}[3]{\hyperref[#2]{#1\ref*{#2}%
      #3$_\text{p\pageref{#2}}$}}%
\newcommand{\HLinkSuffix}[3]{\hyperref[#2]{#1\ref*{#2}{#3}}}
\newcommand{\HLinkShort}[2]{\hyperref[#2]{#1\ref*{#2}}}
\newcommand{\HLink}[2]{\hyperref[#2]{#1~\ref*{#2}}}
\newcommand{\HLinkPage}[2]{\hyperref[#2]{#1~\ref*{#2}%
    $_\text{p\pageref{#2}}$}}

%

\newcommand{\eqrefpar}[1]{\hyperref[equation:#1]{(\ref*{equation:#1})}} %

\providecommand{\deflab}[1]{\label{def:#1}}

\newcommand{\lemlab}[1]{\label{lemma:#1}}
\newcommand{\lemref}[1]{\HLink{Lemma}{lemma:#1}}

\newcommand{\pr}{\Mh{\varphi}}%

\providecommand{\Mh}[1]{{#1}}

\renewcommand{\th}{th\xspace}

\renewcommand{\Re}{{\mathbb{R}}}
 %

%
%

\newcommand{\PntSet}{\ensuremath{\Mh{P}}\xspace}%
\newcommand{\PntSetA}{\ensuremath{\Mh{Q}}\xspace}%
\newcommand{\PSetFar}{\ensuremath{\PntSet^{}_{\geq}\xspace}}%

%
%
%

%

%
%

%
%

%
\newcommand{\cTimes}{\Mh{\beta}}%

\newcommand{\DA}{\Mh{D}}%
\newcommand{\DSTimes}{\Mh{L}}%

\newcommand{\rad}{\Mh{r}}%


\newcommand{\subseq}{\Mh{{{m}}}}%

\newcommand{\subseqA}{\Mh{{{u}}}}%

\newcommand{\seq}{{\Mh{M}}}%

\newcommand{\pnt}{\Mh{\mathsf{x}}}%
\newcommand{\pntc}{\Mh{{x}}}%

\newcommand{\pntA}{\Mh{\mathsf{v}}}%

\newcommand{\tTimes}{\Mh{T}}%

\newcommand{\query}{\Mh{\mathsf{q}}}%

\newcommand{\norm}[2]{\left\| {#2} \right\|_{#1}}

\newcommand{\atgen}{\symbol{'100}}

\newcommand{\SarielThanks}[1]{%
   \thanks{%
      Department of Computer Science; %
      University of Illinois; %
      201 N. Goodwin Avenue; %
      Urbana, IL, %
      61801, USA; %
      {\tt \si{sariel}\atgen{}\si{illinois.edu}}; %
      {\tt \url{http://sarielhp.org}.} #1 } 
}

\newcommand{\SepidehThanks}[1]{%
   \thanks{%
      Department of EECS; 
      MIT; %
      77 Massachusetts Avenue, Cambridge, MA 02139, USA;
      {\tt mahabadi@mit.edu}. %
      #1 
   }%
}

\newcommand{\si}[1]{#1}%
\newcommand{\IntRange}[1]{\left\llbracket #1 \right\rrbracket}
\newcommand{\DistD}[1]{\Mh{\mathcal{D}}_{#1}}

\DefineNamedColor{named}{OliveGreen}{cmyk}{0.64, 0, 0.95, 0.40}
\definecolor{OliveGreen}{cmyk}{0.64, 0, 0.95, 0.40 }

\newcommand{\LSH}{\Term{LSH}\xspace}%

%
%
%


\IfFileExists{sariel_computer.sty}{\def\sarielComp{1}}{}
\ifx\sarielComp\undefined%
\newcommand{\SarielComp}[1]{}
\newcommand{\NotSarielComp}[1]{#1}%
 \else
\newcommand{\SarielComp}[1]{#1}%
\newcommand{\NotSarielComp}[1]{}%
\fi

\SarielComp{%
   \ifx\colorMath\undefined%
   \else
   \DefineNamedColor{named}{ColorMath}{rgb}{0.5,0.2,0}
       \renewcommand{\Mh}[1]{{\textcolor{ColorMath}{#1}}}%
   \fi
}


\BibLatexMode{%

\usepackage[style=alphabetic,backend=biber]{biblatex}

\ExecuteBibliographyOptions{%
      maxnames      = 40,
      maxalphanames = 4,
      maxcitenames  = 3,
      minalphanames = 3,
      minnames      = 1,
      firstinits    = true, 
      doi           = false,
      url           = false,
      isbn          = false
    }

\renewbibmacro{in:}{}
\DeclareFieldFormat[inproceedings]{date}{}%
\DeclareFieldFormat[inproceedings]{pages}{#1, \printfield{year}}
\DeclareFieldFormat[inproceedings]{booktitle}{\mkbibemph{#1,\nopunct}}
\AtEveryBibitem{%
\ifentrytype{inproceedings}{%
   \clearlist{publisher}%
   \clearlist{location}%
   \clearfield{series}%
}{}
}

\ExecuteBibliographyOptions{doi=false}
\ExecuteBibliographyOptions{url=false}
\newbibmacro{string+doi}[1]{%
   \iffieldundef{doi}{%
      \iffieldundef{url}{%
         #1%
      }{%
         \href{\thefield{url}}{#1}%
      }%
   }{%
      \href{http://dx.doi.org/\thefield{doi}}{#1}}%
}%
\DeclareFieldFormat*{title}{\usebibmacro{string+doi}{\mkbibemph{#1}}}

}



\BibLatexMode{%
}

\title{\LSH on the Hypercube Revisited}

\author{%
   Sariel Har-Peled%
   \SarielThanks{%
      Work on this paper was partially supported by a NSF AF awards
      CCF-0915984 and CCF-1217462.}%
   \and%
   Sepideh Mahabadi%
   \SepidehThanks{}%
}

\begin{document}

\maketitle

\begin{abstract}
    \LSH (locality sensitive hashing) had emerged as a powerful
    technique in nearest-neighbor search in high dimensions
    \cite{im-anntr-98, him-anntr-12}. Given a point set $P$ in a
    metric space, and given parameters $\rad$ and $\eps > 0$, the task
    is to preprocess the point set, such that given a query point
    $\query$, one can quickly decide if $\query$ is in distance at
    most $\leq \rad$ or $\geq (1+\eps)\rad$ from the query point. Once
    such a near-neighbor data-structure is available, one can reduce
    the general nearest-neighbor search to logarithmic number of
    queries in such structures \cite{h-rvdnl-01,
       im-anntr-98,him-anntr-12}.

    In this note, we revisit the most basic settings, where $P$ is a
    set of points in the binary hypercube $\brc{0,1}^d$, under the
    $L_1$/Hamming metric, and present a short description of the \LSH
    scheme in this case. We emphasize that there is no new
    contribution in this note, except (maybe) the presentation itself,
    which is inspired by the authors recent work \cite{hm-padra-17}.
\end{abstract}

\section{Locality sensitive hashing revisited}

\subsection{Preliminaries}

\begin{defn}
    Consider a sequence $\subseq$ of $k$, not necessarily distinct,
    integers $i_1, i_2, \ldots, i_k \in \IntRange{d}$, where
    $\IntRange{d} = \brc{1, \ldots, d}$.  For a point
    $\pnt = (\pntc_1, \ldots, \pntc_d) \in \Re^d$, its
    \emphi{projection} by $\subseq$, denoted by $\subseq \pnt$ is the
    point $\pth{ \pntc_{i_1}, \ldots, \pntc_{i_k}} \in \Re^k$.
    Similarly, the \emph{projection} of a point set
    $\PntSet \subseteq \Re^d$ by $\subseq$ is the point set
    $\subseq \PntSet = \Set{\subseq \pnt}{\pnt \in \PntSet}$.
\end{defn}

Given two sequences $\subseq = i^{}_1, \ldots, i^{}_k$ and
$\subseqA = j^{}_1, \ldots, j^{}_{k'}$, let $\subseq | \subseqA$
denote the \emph{concatenated} sequence
\begin{math}
    \subseq | \subseqA = i^{}_1, \ldots, i^{}_k, j^{}_1, \ldots,
    j^{}_{k'}.
\end{math}
Given a probability $\pr$, a natural way to create such a projection,
is to include the $i$\th coordinate, for $i=1,\ldots, d$, with
probability $\pr$. Let $\DistD{\pr}$ denote the distribution of such
sequences of indices.

\begin{defn}
    \deflab{t:splay}%
    Let $\DistD{\pr}^\tTimes$ denote the distribution resulting from
    concatenating $t$ independent sequences sampled from
    $\DistD{\pr}$. The length of a sampled sequence is
    $d \tTimes$.
\end{defn}

Observe that for a point $\pnt \in \brc{0,1}^d$, and
$\seq \in \DistD{\pr}^\tTimes$, the projection $\seq \pnt$ might be
higher dimensional than the original point $\pnt$, as it might contain
repeated coordinates of the original point.

\subsection{Algorithm}

\paragraph{Input.}
The input is a set $\PntSet$ of $n$ points in the hypercube
$\brc{0,1}^d$, and parameters $\rad$ and $\eps$.

\paragraph{Preprocessing.}
We set parameters as follows:
\begin{align*}
  \cTimes = \frac{1}{1+\eps} \in (0,1),%
  \quad  %
  \pr = 1 - \exp \pth{ - \frac{1}{\rad}} \approx \frac{1}{\rad},
  \quad %
  \tTimes = \cTimes \ln n,
  \quad
  \text{and}%
  \quad
  \DSTimes = O( n^{\cTimes} \log n).
\end{align*}
We randomly and independently pick $\DSTimes$ sequences
\begin{math}
    \seq_1, \ldots, \seq_\DSTimes \in \DistD{\pr}^\tTimes.
\end{math}
Next, the algorithm computes the point sets
$\PntSetA_i = \seq_i \PntSet_i$, for $i=1,\ldots, \DSTimes$, and stores
them each in a hash table, denoted by $\DA_i$, for
$i=1,\ldots, \DSTimes$.

\paragraph{Answering a query.}

Given a query point $\query \in \brc{0,1}^d$, the algorithm computes
$\query_i = \seq_i \query$, for $i=1,\ldots, \DSTimes$.  From each
$\DA_i$, the algorithm retrieves a list $\ell_i$ of all the points
that collide with $\query_i$. The algorithm scans the points in the
lists $\ell_1, \ldots, \ell_\DSTimes$. If any of these points is in
Hamming distance smaller than $(1+\eps)\rad$, the algorithm returns it
as the desired near-neighbor (and stops). Otherwise, the algorithm
returns that all the points in $\PntSet$ are in distance at least
$\rad$ from $\query$.

\subsection{Analysis}

\subsubsection{Correctness}

\begin{lemma}
    \lemlab{success:simple}%
    Let $K$ be a set of $\rad$ marked/forbidden coordinates. The
    probability that a sequence
    $\seq = (\subseq_1,\ldots, \subseq_\tTimes)$ sampled from
    $\DistD{\pr}^\tTimes$ does not sample any of the coordinates of
    $K$ is
    \begin{math}
        1/n^{\cTimes}.
    \end{math}
    This probability increases if $K$ contains fewer coordinates.
\end{lemma}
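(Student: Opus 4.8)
The plan is to reduce the statement to a single computation built on the independence of the sampling, and then substitute the chosen parameter values. First I would recall the sampling structure: the sequence $\seq = (\subseq_1,\ldots,\subseq_\tTimes)$ consists of $\tTimes$ independent draws from $\DistD{\pr}$, and within each draw every coordinate of $\IntRange{d}$ is included independently with probability $\pr$. Consequently, fixing any single coordinate $i \in K$, the events ``$i$ is not sampled in round $\subseq_j$'' are independent across $j = 1,\ldots,\tTimes$, each of probability $1-\pr$, so $i$ is avoided by the entire sequence $\seq$ with probability $(1-\pr)^\tTimes$.

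Next I would combine over the $\rad$ coordinates of $K$. Since distinct coordinates are included independently of one another (both within a round and across rounds), the event that $\seq$ avoids all of $K$ factors into a product over the coordinates of $K$, giving probability $\pbrc{(1-\pr)^\tTimes}^{\rad} = (1-\pr)^{\rad \tTimes}$. Equivalently, one may argue round by round: a single round misses all of $K$ with probability $(1-\pr)^\rad$, and the $\tTimes$ rounds are independent, so again the probability is $(1-\pr)^{\rad\tTimes}$.

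The final step is the substitution, and the key observation is that the parameters are chosen so that $1-\pr = \exp(-1/\rad)$ \emph{exactly}, which turns the approximation $\pr \approx 1/\rad$ into a clean identity. Hence
\[
    (1-\pr)^{\rad \tTimes} = \exp\pth{-\tTimes} = \exp\pth{-\cTimes \ln n} = n^{-\cTimes},
\]
using $\tTimes = \cTimes \ln n$, which is precisely the claimed bound $1/n^{\cTimes}$. For the monotonicity remark, note that replacing $\rad = |K|$ by a smaller cardinality only lowers the exponent $\rad\tTimes$ of the base $1-\pr \in (0,1)$, so the avoidance probability can only increase; thus taking $|K| = \rad$ is the worst (smallest) case.

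I do not expect a genuine obstacle here: the result is essentially a direct calculation. The only points requiring care are bookkeeping the independence correctly—both across the $\tTimes$ rounds and across the $\rad$ coordinates of $K$—and recognizing that the exponential form of $\pr$ is exactly what makes the stated probability an equality rather than an approximation.
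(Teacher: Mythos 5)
Your proposal is correct and follows essentially the same argument as the paper: the paper computes the per-round miss probability $(1-\pr)^{\rad} = \pth{e^{-1/\rad}}^{\rad} = 1/e$ and raises it to the $\tTimes$ independent rounds to get $e^{-\tTimes} = n^{-\cTimes}$, which is exactly your round-by-round decomposition with the same exact identity $1-\pr = e^{-1/\rad}$. Your additional per-coordinate factoring and the explicit monotonicity remark are fine but do not change the substance of the argument.
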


\begin{proof}
    For any $i$, the probability that $\subseq_i$ does not contain any
    of these coordinates is
    \begin{math}
        (1-\pr)^\rad%
        =%
        \pth{e^{-1/\rad }}^{\rad}%
        =%
        1/e.
    \end{math}
    Since this experiment is repeated $\tTimes$ times, the probability
    is
    \begin{math}
        e^{-\tTimes}%
        =%
        e^{ - \cTimes \ln n}%
        =%
        n^{- \cTimes}.
    \end{math}
    \qquad
\end{proof}

\begin{lemma}
    \lemlab{lists}%
    We
    have the following:
    \begin{compactenum}[\;(A)]
        \item Let $\pnt$ be the nearest-neighbor to $\query$ in
        $\PntSet$.  If $\norm{1}{\query -\pnt} \leq \rad$ then, with
        high probability, the data-structure returns a point that is
        in distance $\leq (1+\eps)\rad$ from $\query$.

        \item In expectation, the total number of points in
        $\ell_1,\ldots, \ell_\DSTimes$ that are in distance
        $\geq (1+\eps)\rad$ from $\query$ is $\leq \DSTimes$.
    \end{compactenum}
\end{lemma}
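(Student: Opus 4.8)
The plan is to reduce both parts to the avoidance event already analyzed in \lemref{success:simple}. The key observation is that a stored point $\pntA \in \PntSet$ lands in the list $\ell_i$ precisely when $\seq_i \pntA = \seq_i \query$; since $\pntA$ and $\query$ agree on every coordinate outside the set on which they differ, this collision occurs if and only if the sampled sequence $\seq_i$ avoids all coordinates of that difference set. Thus every collision probability is an avoidance probability, and the whole argument amounts to feeding the appropriate difference-set size into \lemref{success:simple}. The near/far dichotomy then comes down to whether that set has size $\leq \rad$ or $\geq (1+\eps)\rad$.

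For part (A), let $K$ be the set of coordinates on which $\query$ and $\pnt$ disagree, so $\cardin{K} = \norm{1}{\query - \pnt} \leq \rad$. By \lemref{success:simple} together with its monotonicity remark (fewer coordinates only help), the probability that $\seq_i$ avoids $K$ — equivalently, that $\pnt$ collides with $\query$ in $\DA_i$ — is at least $1/n^{\cTimes}$. Since the $\DSTimes$ sequences are independent, the probability that $\pnt$ appears in no list is at most $\pth{1 - 1/n^{\cTimes}}^{\DSTimes} \leq \exp\pth{-\DSTimes/n^{\cTimes}}$, which is $n^{-c}$ for any prescribed constant $c$ once the leading constant in $\DSTimes = O(n^{\cTimes}\log n)$ is chosen large enough. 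Whenever $\pnt$ does appear in some list, the algorithm inspects it, and as $\norm{1}{\query - \pnt} \leq \rad < (1+\eps)\rad$ it reports a point within the required distance.

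For part (B), fix any $\pntA \in \PntSet$ with $\norm{1}{\query - \pntA} \geq (1+\eps)\rad$, and let $K_{\pntA}$ be its difference set with $\query$, so $\cardin{K_{\pntA}} \geq (1+\eps)\rad$. As before, $\pntA$ collides with $\query$ in $\DA_i$ exactly when $\seq_i$ avoids $K_{\pntA}$. Rerunning the computation in the proof of \lemref{success:simple} but with $(1+\eps)\rad$ coordinates: each of the $\tTimes$ blocks avoids $K_{\pntA}$ with probability at most $(1-\p)^{(1+\eps)\rad} = e^{-(1+\eps)}$, so the full sequence avoids it with probability at most $e^{-(1+\eps)\tTimes} = e^{-(1+\eps)\cTimes \ln n} = 1/n$, where the last equality uses the defining relation $(1+\eps)\cTimes = 1$. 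Summing this per-list bound over the at most $n$ far points and over the $\DSTimes$ lists, linearity of expectation gives an expected far-point count of at most $\DSTimes \cdot n \cdot (1/n) = \DSTimes$.

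I do not anticipate a serious obstacle beyond careful bookkeeping; the heart of the matter is simply the gap between the two avoidance probabilities $1/n^{\cTimes}$ and $1/n$, which is exactly what the parameter choice $\cTimes = 1/(1+\eps)$ engineers. Two small points demand attention: the approximation $\p \approx 1/\rad$ must be used in its exact form $\p = 1 - e^{-1/\rad}$ so that $(1-\p)^{\rad} = 1/e$ stays clean in both computations, and the monotonicity remark of \lemref{success:simple} is precisely what licenses using the inequalities ``$\cardin{K} \leq \rad$'' and ``$\cardin{K_{\pntA}} \geq (1+\eps)\rad$'' in place of exact difference-set sizes.
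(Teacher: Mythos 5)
Your proof is correct and follows essentially the same route as the paper: part (A) feeds the at-most-$\rad$-sized difference set into \lemref{success:simple} and uses independence of the $\DSTimes$ sequences to drive the failure probability to $1/n^{O(1)}$, while part (B) reruns the avoidance computation with $(1+\eps)\rad$ coordinates to get a $1/n$ collision bound per far point and applies linearity of expectation over the $\DSTimes$ lists. The only cosmetic difference is that your per-block bookkeeping in (B) makes the exponent explicit, which is slightly cleaner than the paper's write-up (the paper writes $(1-\pr)^{\Delta}$ where $(1-\pr)^{\Delta\tTimes}$ is meant).
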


\begin{proof}
    (A) The good event here is that $\pnt$ and $\query$ collide under
    one of the sequences of $\seq_1,\ldots, \seq_\DSTimes$.  However,
    the probability that $\seq_i \pnt = \seq_i \query$ is at least
    $1/n^{\cTimes}$, by \lemref{success:simple}, as this is the
    probability that $\seq_i$ does not sample any of the (at most
    $\rad$) coordinates where $\pnt$ and $\query$ disagree. As such,
    the probability that all $\DSTimes$ data-structures fail (i.e.,
    none of the lists $\ell_1,\ldots, \ell_\DSTimes$ contains $\pnt$),
    is at most $(1- 1/n^{\cTimes})^{\DSTimes} < 1/n^{O(1)}$, as
    $\DSTimes = O\pth{ n^{\cTimes} \log n}$.

    (B) Let $\PSetFar$ be the set of points in $\PntSet$ that are in
    distance $\geq (1+\eps)\rad$ from $\query$.  For a point
    $\pntA \in \PSetFar$, with $\Delta = \norm{1}{\pntA - \query}$, we
    have that the probability for $\seq \in \DistD{\pr}^\tTimes$
    misses all the $\Delta$ coordinates, where $\pntA$ and $\query$
    differ, is
    \begin{align*}
      (1-\pr)^{\Delta}%
      \leq 
      (1-\pr)^{(1+\eps)\rad \tTimes}%
      =%
      \pth{e^{-1/r}}^{(1+\eps)\rad \tTimes}%
      =%
      \exp\pth{ -(1+\eps) \cTimes \ln n }%
      =%
      \frac{1}{n},
    \end{align*}
    as $\pr = 1-e^{-1/\rad}$, $\tTimes = \cTimes \ln n$, and
    $\cTimes = 1/(1+\eps)$. But then, for any $i$, we have
    \begin{align*}
      \Ex{\bigl.\cardin{\ell_i}} = \sum_{\pnt \in \PSetFar}
      \ProbY{\bigl.\seq_i}{\seq_i \pnt = \seq_i \query }
      \leq%
      \cardin{\PSetFar} \frac{1}{n}
      \leq%
      1.
    \end{align*}
    As such, the total number of far points in the lists is at most
    $\DSTimes \cdot 1 = \DSTimes$, implying the claim.
\end{proof}

\subsubsection{Running time}

For each $i$, the query computes $\seq_i \query$ and that takes
$O(d \tTimes) = O(d\log n)$ time. Repeated $\DSTimes$ times, this takes
$O( \DSTimes d \log n)$ time overall.  Let $X$ be the random variable
that is the number of points in the extracted lists that are in
distance $\geq (1+\eps)\rad$ from the query point.  The time to scan
the lists is
\begin{math}
    O\pth{\bigl. d( X +1) },
\end{math}
 since the algorithm stops as
soon as it finds a near point. As such, by \lemref{lists} (B), the
expected query time is
\begin{math}
    O( \DSTimes d \log n + \DSTimes d) = O\pth{d n^{1/(1+\eps)} \log^2
       n}.
\end{math}

\newcommand{\uniqX}[1]{\Mh{\mathrm{uniq}}\pth{#1}}%

\subsubsection{Improving the performance (a bit)}

Observe that for $\seq \in \DistD{\pr}^\tTimes$, and any two points
$\pnt, \pntA \in \brc{0,1}^d$, all the algorithm cares about is
whether $\seq \pnt = \seq \pntA$. As such, if a coordinate is probed
many times by $\seq$, we might as well probe this coordinate only
once. In particular, for a sequence $\seq \in \DistD{\pr}^\tTimes$,
let $\seq' = \uniqX{\seq}$ be the projection sequence resulting from
removing replications in $\seq$. Significantly, $\seq'$ is only of
length $\leq d$, and as such, computing $\seq' \pnt$, for a point
$\pnt$, takes only $O(d)$ time. It is not hard to verify that one can
also sample directly $\uniqX{\seq}$, for
$\seq \in \DistD{\pr}^\tTimes$, in $O(d)$ time. This improves the
query and processing by a logarithmic factor.

\subsubsection{Result}

\begin{theorem}
    Given a set $P$ of $n$ points in $\brc{0,1}^d$, and parameters
    $\rad, \eps$, one can preprocess $P$ in
    $O( d n^{1+1/(1+\eps)} \log n)$ time and space, such that given a
    query point $\query$, the algorithm returns, in expected
    $O( d n^{1/(1+\eps)} \log n)$ time, one of the following:
    \smallskip%
    \begin{compactenum}[\quad(A)]
        \item a point $\pnt \in P$ such that
        $\norm{1}{\query - \pnt} \leq (1+\eps)\rad$, or

        \item the distance of $\query$ from $P$ is larger than $\rad$.
    \end{compactenum}
    \smallskip%
    The algorithm may return either result if the distance of $\query$
    from $P$ is in the range $[r,(1+\eps)r]$. The algorithm succeeds
    with high probability (per query).
\end{theorem}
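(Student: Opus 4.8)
The plan is to assemble the statement from the pieces already in place: the two parts of \lemref{lists} supply correctness and control over the expected number of spurious collisions, while the running-time accounting supplies the per-query and preprocessing costs once the $\uniqX{\cdot}$ optimization is used to store and sample each projection sequence in $O(d)$ rather than $O(d\tTimes)$ time. I would first fix $\cTimes = 1/(1+\eps)$ so that $\DSTimes = O(n^{\cTimes}\log n) = O(n^{1/(1+\eps)}\log n)$, and note that every kept coordinate of a projected point survives, so each of the $\DSTimes$ hash tables $\DA_i$ holds $n$ points of dimension $\leq d$ after unique-ification.

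For correctness I would run a case analysis on $\Delta = \norm{1}{\query - \pnt^*}$, where $\pnt^*$ is the true nearest neighbor of $\query$ in $P$, relying on the crucial observation that the query procedure verifies the exact Hamming distance of every retrieved candidate before returning it; consequently the algorithm never reports a point farther than $(1+\eps)\rad$, so the only possible error is a \emph{false negative}. If $\Delta \leq \rad$, then \lemref{lists}(A) guarantees that with high probability some list contains $\pnt^*$, which passes the distance test, so outcome (A) holds. If $\Delta > (1+\eps)\rad$ then no candidate can pass the test and the algorithm deterministically reports outcome (B), which is correct since $\Delta > \rad$. In the boundary case $\Delta \in [\rad,(1+\eps)\rad]$ either outcome is legal: any returned point is within $(1+\eps)\rad$, and the ``distance $>\rad$'' verdict is also consistent with $\Delta \geq \rad$, which is precisely the slack the theorem grants.

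For the query time I would separate the deterministic projection cost from the data-dependent scanning cost. Computing $\uniqX{\seq_i}\query$ for all $i$ takes $O(\DSTimes d)$ time. Writing $X$ for the number of retrieved candidates at distance $\geq (1+\eps)\rad$, the scan costs $O(d(X+1))$ since the procedure halts on the first near point, and \lemref{lists}(B) gives $\Ex{X} \leq \DSTimes$; hence the expected query time is $O(\DSTimes d) = O(dn^{1/(1+\eps)}\log n)$. Preprocessing builds $\DSTimes$ tables, each costing $O(nd)$ time and space to sample $\uniqX{\seq_i}$ and insert the $n$ projected points, for a total of $O(\DSTimes n d) = O(dn^{1+1/(1+\eps)}\log n)$.

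The step needing the most care is keeping the two probabilistic guarantees cleanly separated: correctness is a high-probability statement that flows from the union bound $(1-1/n^{\cTimes})^{\DSTimes} < n^{-O(1)}$ in \lemref{lists}(A), whereas the running-time bound holds only in expectation and comes from the expectation estimate $\Ex{X}\le\DSTimes$ in \lemref{lists}(B) via linearity. I expect the only genuine subtlety to be stating the boundary case explicitly and verifying that the in-query distance filter is exactly what rules out false positives, so that the high-probability failure is confined to the $\Delta\le\rad$ regime.
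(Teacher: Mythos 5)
Your proposal is correct and follows essentially the same route as the paper: the paper's theorem is precisely the assembly of \lemref{lists}~(A) for high-probability correctness, \lemref{lists}~(B) for the expected number of far collisions, and the running-time accounting sharpened by the $\uniqX{\cdot}$ trick that reduces the per-sequence projection cost from $O(d\tTimes)$ to $O(d)$, which is exactly what you do. Your explicit three-way case analysis on $\Delta$ (including the boundary regime $[\rad,(1+\eps)\rad]$) and the observation that in-query distance verification rules out false positives are details the paper leaves implicit, but they are faithful elaborations, not a different argument.
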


One can also get a high-probability guarantee on the query time. For a
parameter $\delta > 0$, create $O( \log \delta^{-1})$ \LSH
data-structures as above. Perform the query as above, except that when
the query time exceeds (say) twice the expected time, move on to redo
the query in the next \LSH data-structure. The probability that the
query had failed on one of these \LSH data-structures is
$\leq 1/2$, by Markov's inequality. As such, overall, the query time
becomes $O( d n^{1/(1+\eps)} \log n \log \delta^{-1})$, with
probability $\geq 1-\delta$.

 
\hypersetup{%
}

\BibTexMode{%
 \providecommand{\CNFX}[1]{ {\em{\textrm{(#1)}}}}
  \providecommand{\tildegen}{{\protect\raisebox{-0.1cm}{\symbol{'176}\hspace{-0.03cm}}}}
  \providecommand{\SarielWWWPapersAddr}{http://sarielhp.org/p/}
  \providecommand{\SarielWWWPapers}{http://sarielhp.org/p/}
  \providecommand{\urlSarielPaper}[1]{\href{\SarielWWWPapersAddr/#1}{\SarielWWWPapers{}/#1}}
  \providecommand{\Badoiu}{B\u{a}doiu}
  \providecommand{\Barany}{B{\'a}r{\'a}ny}
  \providecommand{\Bronimman}{Br{\"o}nnimann}  \providecommand{\Erdos}{Erd{\H
  o}s}  \providecommand{\Gartner}{G{\"a}rtner}
  \providecommand{\Matousek}{Matou{\v s}ek}
  \providecommand{\Merigot}{M{\'{}e}rigot}
  \providecommand{\Hastad}{H\r{a}stad\xspace}
  \providecommand{\CNFCCCG}{\CNFX{CCCG}}
  \providecommand{\CNFBROADNETS}{\CNFX{BROADNETS}}
  \providecommand{\CNFESA}{\CNFX{ESA}}
  \providecommand{\CNFFSTTCS}{\CNFX{FSTTCS}}
  \providecommand{\CNFIJCAI}{\CNFX{IJCAI}}
  \providecommand{\CNFINFOCOM}{\CNFX{INFOCOM}}
  \providecommand{\CNFIPCO}{\CNFX{IPCO}}
  \providecommand{\CNFISAAC}{\CNFX{ISAAC}}
  \providecommand{\CNFLICS}{\CNFX{LICS}}
  \providecommand{\CNFPODS}{\CNFX{PODS}}
  \providecommand{\CNFSWAT}{\CNFX{SWAT}}
  \providecommand{\CNFWADS}{\CNFX{WADS}}

}

\BibLatexMode{\printbibliography}


\end{document}
